\newtheorem{definition}{theorem}
\newtheorem{lemma}{theorem}
\newtheorem{property}{theorem}
\newtheorem{theorem}{theorem}
\newenvironment{proof}{{\bf Proof}:}{\hfill $\Box$ }
\begin{document}

\title{Performance improvement of an optical network providing services based on multicast}
\author{Vincent Reinhard$^{1,2}$ Johanne Cohen$^{2}$, Joanna Tomasik$^{1}$,\\ Dominique Barth$^{2}$, Marc-Antoine Weisser$^{1}$ \\ \\
 (1) SUPELEC Systems Sciences, Computer Science Dpt., \\ 91192 Gif sur Yvette, France \\
email : FistName.LastName@supelec.fr\\ \   \\
(2) PRiSM, University of Versailles, \\ 45 avenue des Etats-Unis, 78035 Versailles, France\\
email : FistName.LastName@prism.uvsq.fr}

\maketitle

\begin{abstract}
Operators of networks covering large areas are confronted with demands from some of their customers who are virtual service providers. These providers may call for the connectivity service which fulfils the specificity of their services, for instance a multicast transition with allocated bandwidth. On the other hand, network operators want to make profit by trading the connectivity service of requested quality to their customers and to limit their infrastructure investments (or do not invest anything at all).

We focus on circuit switching optical networks and work on repetitive multicast demands whose source and destinations are {\em \`a priori} known by an operator. He may therefore have corresponding trees ``ready to be allocated'' and adapt his network infrastructure according to these recurrent transmissions. This adjustment consists in setting available branching routers in the selected nodes of a predefined tree.  The branching nodes are opto-electronic nodes which are able to duplicate data and retransmit it in several directions. These nodes are, however, more expensive and more energy consuming than transparent ones.

In this paper we are interested in the choice of nodes of a multicast tree where the limited number of branching routers should be located in order to minimize the amount of required bandwidth. After formally stating the problem we solve it by proposing a polynomial algorithm whose optimality we prove. We perform exhaustive computations to show an operator gain obtained by using our algorithm. These computations are made for different methods of the multicast tree construction. We conclude by giving dimensioning guidelines and outline our further work.
\end{abstract}

\section{Introduction}
Optical networks have become a dominant technology in modern networks covering large areas. Their advantage consists in providing an ultra-high bit rate obtained with slight energy consumption.   All-optical networks are particularly interesting from economic and ecological point of view because a cost of transparent routers is low and their energy consumption is negligible~\cite{BonChi+09}.

Modern networks face a growing demand on the part of service providers. New offered services are more complex than the simple connectivity service assured traditionally by network operators.  Providers sell services like storage and computation together with connectivity service to their customers. The part of this market ensuring on-the-fly resource allocation, called for commercial reasons Cloud Computing~\cite{ArmFox+10}, is under a rapid development. In order to meet the demands of their customers, virtual service providers have to purchase a guaranteed connectivity service at network operators.   At the same time, network operators can deal with numerous virtual service providers. They are interested in using their network resources the most efficiently and in this way minimize the cost of a prospective extension of their existing infrastructure.

We studied the mechanisms to execute distributed applications in an optical mesh network in the context of the CARRIOCAS project~\cite{Aud07,Ver+08}. Unlike a customary approach applied in Grids where applications benefit from a dedicated network infrastructure~\cite{FosKes+02}, this project went into the study of the coexistence of massive distributed applications in a network whose operator should make financial profit. With GMPLS~\cite{Man04} deployed, the CARRIOCAS network has to ensure both unicast and multicast transmissions. Routers which are able to duplicate data and send it in several directions allow a network operator to lower the bandwidth amount necessary to construct a multicast tree. On the other hand, these branching nodes are more expensive and more energy consuming than the transparent ones. The realistic assumption is thus that only a subset of routers is equipped with the duplicating functionality. In~\cite{ReiTom+09} we presented our solution to the problem consisting in the construction of a tree to any multicast request  with minimization of the amount of used bandwidth  under assumption of a limited number of branching nodes. The solution is heuristic because we proved that this problem is {\em NP}-complete. It turned out to be the most effective when the branching nodes were placed in the most homogeneous way in a network.   The most homogeneous placement of $k$ branching nodes represents in fact a solution to the $k$-centre problem which is also {\em NP}-complete~\cite{GarJoh79}.

Our study mentioned above inspired us to explore certain special cases of multicast demands. A network operator can know in advance recurrent multicast transmissions which require a lot of bandwidth. Being aware of frequent demands for identical (or almost identical) multicast transmissions an operator may have corresponding trees ``ready to be allocated'' and adapt his network infrastructure according to these recurrent transmissions. This adjustment may consist in setting available branching routers in the selected nodes of the predefined tree.  In this paper we are interested in the choice of nodes of a multicast tree where the branching routers should be located in order to minimize the amount of required bandwidth. This approach allows an operator to make his network more efficient without any additional cost.

In the following section we make a survey of existing solutions to multicast tree allocation   and explain the specificity of branching routers. In Section~\ref{PbForm} our problem is stated in the formal way. We also formulate (Section~\ref{properties}) the solution properties. Next, we propose an algorithm to solve our problem, compute its complexity, and prove that it gives an optimal solution. Our problem is evidenced to be polynomial.  Section~\ref{Results} presents the results of bandwidth requirements for multicast trees depending on the number of available branching routers. The multicast trees which are subject of this analysis have been obtained by two methods, the first one based on the shortest path approach and the second one based on the Steiner tree approach.  In the final section we give the conclusions and outline our further work.

\section{Multicast tree construction \label{TreeConstr}}

There are several schemes for multicasting data in networks~\cite{SalRee+97,JeoQia+00}. We present here the schemes adapted to optical circuit switching networks.  The first one is to construct virtual circuits from the multicast source to each destination. Such a scheme is equivalent to multiple unicasts (Fig.~\ref{MulticastAsUnicast}) and the network bandwidth used by a large multicast group may become unacceptable~\cite{MalZha+98}.
\begin{figure}
\begin{minipage}[b]{.46\linewidth}
\begin{center}
\includegraphics[height=1.2in,width=2.0in]{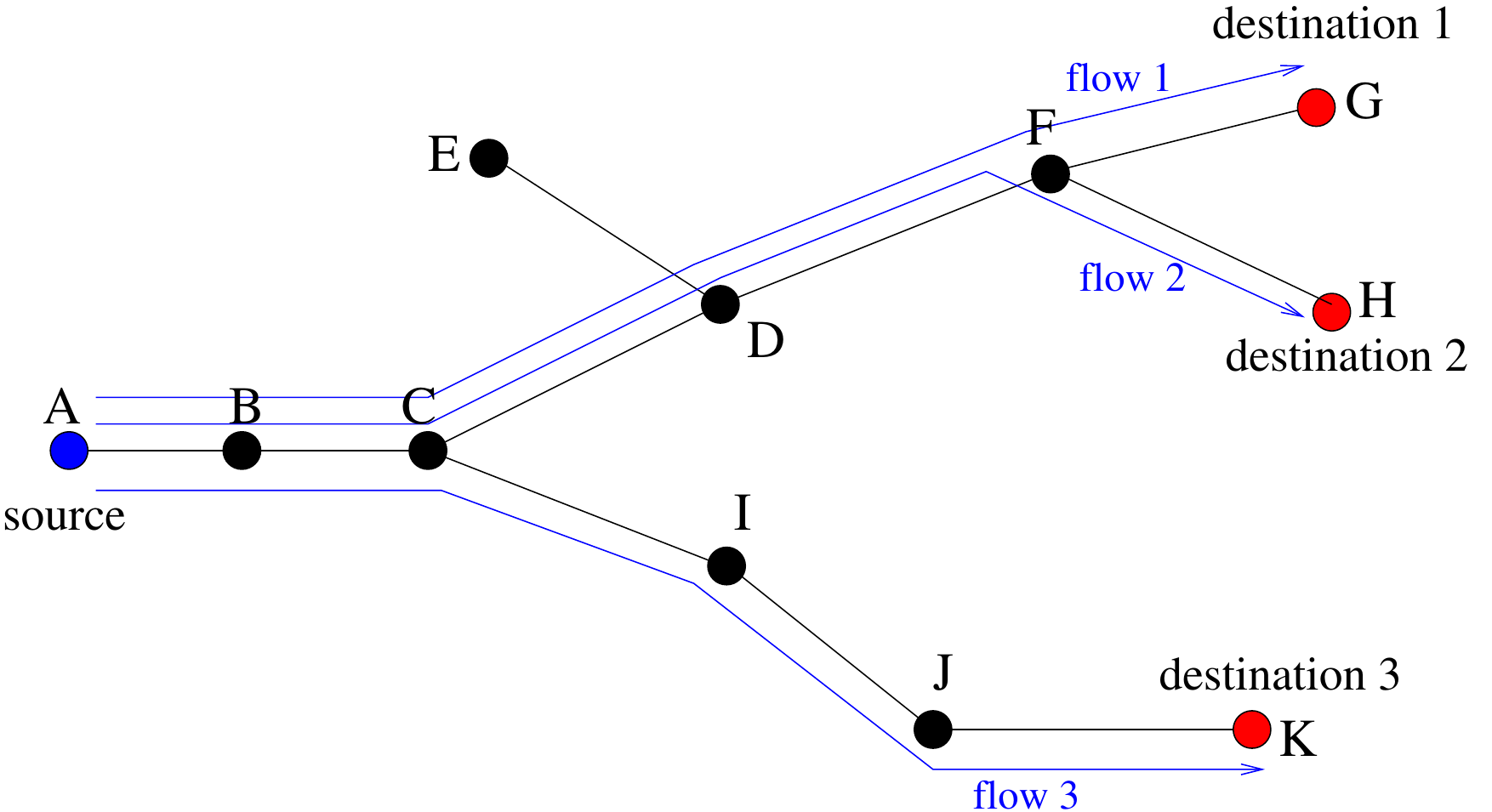}
\caption{A multicast with source $A$ and destinations $G$, $H$, $K$ built up as a set of unicasts (without branching nodes) \label{MulticastAsUnicast}}
\end{center}
 \end{minipage} \hfill
 \begin{minipage}[b]{.46\linewidth}
\begin{center}
\includegraphics[height=1.9in,width=2.0in]{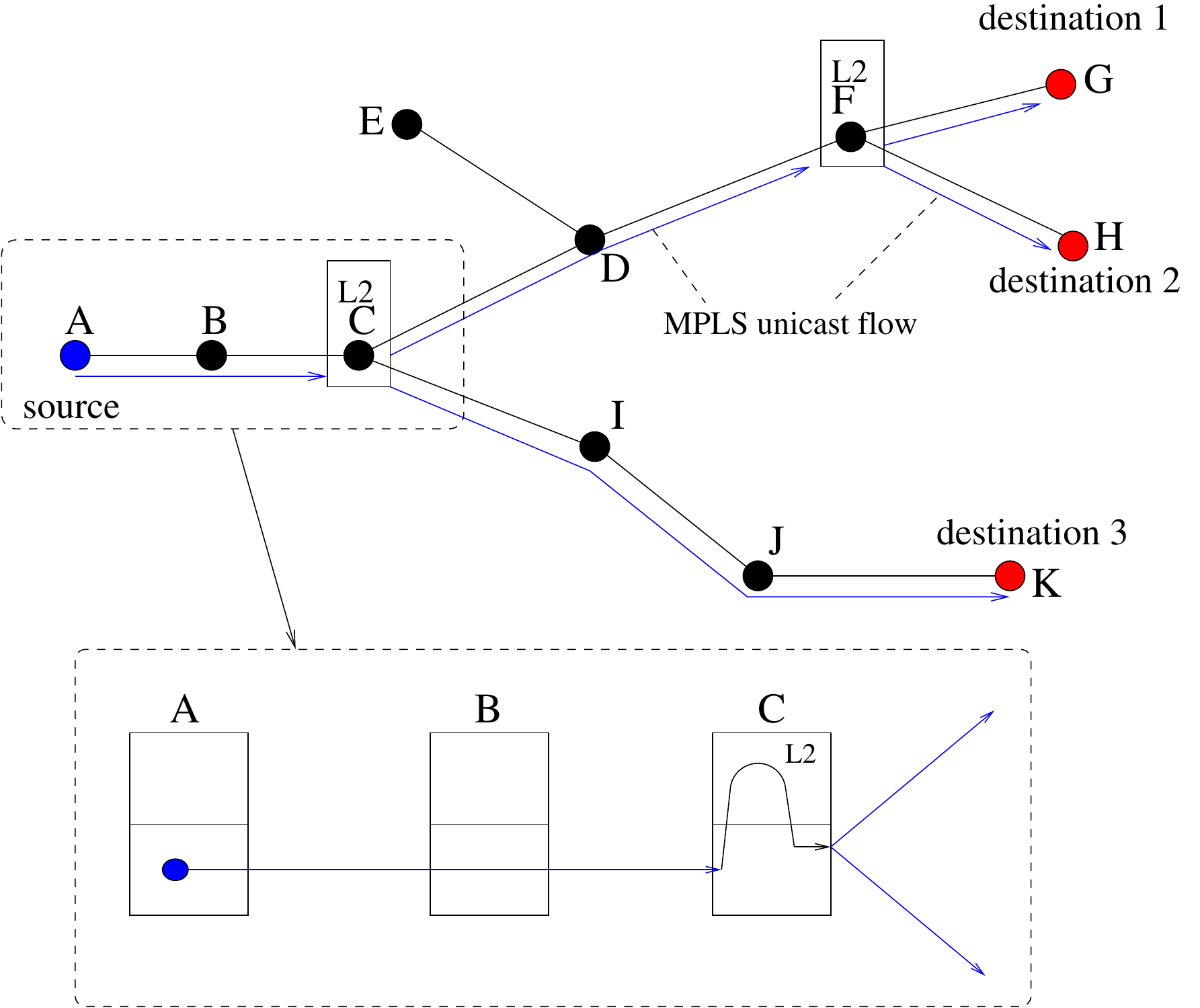}
\caption{A multicast with source $A$, destinations $G$, $H$, $K$ and branching nodes $C$, $F$ \label{MulticastWithBranching}}
\end{center}
 \end{minipage}
\end{figure}

In another scheme the multicast source sends data to the first destination and each destination acts as a source for the next destination until all destinations receive the data flow. In yet another scheme, intermediate routers make copies of data packets and dispatch them to their successors in the multicast tree. This solution allows the multicast transmission to share bandwidth on the common links. Numerous multicast tree algorithms, which follow the latter scheme, have been proposed and can roughly be classified into two categories~\cite{SalRee+97}. The first category contains the algorithms based on the shortest path while minimizing the weight of the path from the multicast source to each destination. The second category contains algorithms based on the Steiner tree problem~\cite{Bea89,BhaJaf83,HwaRic92,KomPas+92} which we formally define in Section~\ref{PbForm}. The algorithms derived from the Steiner tree problem minimize the total weight of the multicast tree. They are heuristic because the Steiner tree problem is NP-complete~\cite{HwaRic92}.

From the technological point of view, routers able to duplicate packets introduce a supplementary delay due to O/E/O conversions and are more expensive. For these reasons network operators want to limit the number of such routers which we call "diffusing nodes" or ``branching nodes''. The diffusing nodes which we consider are not equipped with the functionality ``drop-and-continue''~\cite{ZhaWei+00} as this operation mode is nowadays applied in practice exclusively in border routers. In Fig.~\ref{MulticastWithBranching} we go back to the example illustrated in Fig.~\ref{MulticastAsUnicast}. This time there are two branching nodes which allow one to reduce the amount of used bandwidth. Contrary to the solution built up of unicasts, in the one with branching nodes the bandwidth is used only once in each link.

\section{Formalization of optimization problem \label{PbForm}}
An optical network is modelled by a directed connected symmetrical graph~\cite{Ber66} $G=(V,E)$.  A multicast request is a pair $\epsilon = (e,R)$, where $e\in V$ is a multicast source and $R\subset V$ is a set of multicast destinations. We suppose that all multicast requests which we deal with can be transmitted in the network as a set of unicast transmissions (see Section~\ref{TreeConstr}). Therefore, we do not have to make precise the amount of data to transfer.  For a given multicast request $\epsilon$ we first determine its tree, $A_\epsilon = (V_{ A_\epsilon },E_{ A_\epsilon })$. This tree is a subgraph of $G$ rooted in $e$, whose leaves are in the set $R$ and whose arcs are directed from the root towards the leaves. We note $D_{A_\epsilon}$ the diffusing nodes in $A_\epsilon $,  $D_{A_\epsilon}\subseteq V_{A_\epsilon}$. Their allowed number is written as $k$. We now try to determine the choice of diffusing nodes in order to minimize the bandwidth consumption. 

We will adopt as a metric of the bandwidth used by a multicast request a total number of arcs which construct its tree taking into account the fact that an arc may transport the same data more than once. To define this metric formally we start by determining the situations in which a request $\epsilon$ is satisfied by a set of paths in the multicast tree, $\mathcal{S}(D_{A_\epsilon})$. These situations are as follows:
\begin{itemize}
\item every node of $R$ is the final extremity of exactly one path in $\mathcal{S}(D_{A_\epsilon})$,
\item every node of $D_{ A_\epsilon }$ is the final extremity of at most one path in $\mathcal{S}(D_{A_\epsilon})$,
\item the origin of a path in $\mathcal{S}(D_{A_\epsilon})$ is either $e$ or a node of $D_{ A_\epsilon }$; in the latter case, the node of $D_{ A_\epsilon }$ is also the final extremity of a path in $\mathcal{S}(D_{A_\epsilon})$,
\item any node of $a\in D_{ A_\epsilon }$ is in a path $p\in  \mathcal{S}(D_{A_\epsilon})$ only if it is the final extremity or the origin of $p$.
\end{itemize}
The metric $\mbox{load}_{ A_\epsilon }$ is defined as a sum of lengths of all paths in $\mathcal{S}(D_{A_\epsilon})$. The optimization problem which consists in placing $k$ diffusing nodes can be thus formalized as: 
\begin{definition}
{\bf Diffusing Nodes in Multicast Tree Problem (DNMTP)}\\
{\bf Data: } a directed connected symmetrical graph $G=(V,E)$, a multicast request $\epsilon = (e,R)$, a rooted multicast tree corresponding to this request $A_\epsilon$, and a natural number $k$.\\
{\bf Goal: } Find   $D_{ A_\epsilon }$, $|D_{ A_\epsilon }|\leq k$ so that $\mbox{load}_{ A_\epsilon }$ is minimal.
\end{definition}

\section{Properties of the solution induced by the subset of vertices $D$ \label{properties}}

\newcommand{\shadow}{window}
\newcommand{\border}{path number}

We now focus on a given multicast $\epsilon$ and we omit the subscript $\epsilon$ in the formul\ae{} for their clarity. This section is devoted to studying properties of the solution
induced by a set $D$ of diffusing nodes, $\mathcal{S}(D)$. We introduce the notation used for its
description. For any $u$, $u\in V_{A}$ in $A$
we define $A^u$ as a sub-tree of $A$ rooted in $u$.
We also define three parameters of $u$ in $A$.   A set $D^u$ is a set of
diffusing nodes in tree $A^u$ ($D^u\subseteq D$). A set
$R^u$ is a set of destinations nodes in tree $A^u$ ($R^u\subseteq R$). $a^u$ is the arc connecting $A^u$ from the
remainder of $A$. We propose:
\begin{definition}
  Let $D$ be a set of vertices in $A$. Let  $u$ be a vertex in $A$.
The \emph{path number} $\mbox{pn}(u)$ is a number of paths in a
solution  $\mathcal{S}(D)$ spanned on $A$ which pass through
$u$ or which terminate in $u$.  The  {\em \shadow}   of the solution $\mathcal{S}(D)$  on arc $a^u$ is 
an triplet of integers $(\beta, d,load)$ where  $\beta$ is its  {\border} $\mbox{pn}(u)$, $d=|D^u|$ and  $load$ represents the load of $\mathcal{S}(D)$   in tree $A^u$.
\end{definition}
We can notice that each solution induced by the set of diffusing
nodes $D$, can be defined by each \shadow\  for each arc of tree $A$.

\begin{lemma}  \label{lemma:1child}
   Let $D$ be a set of vertices in $A$.  Let $u$ be a node having one child $u_1$ of $A$. The    \shadow\ on arc $a^{u}$ is
equal to
\begin{equation}
  \label{eq:1}
  \left\{
    \begin{array}[]{llll}
      (1,d+1,load+1)  &   \mbox{\rm if} & u \in D  & \\
   (b+1,d,load+b+1)  &   \mbox{\rm if} & u  \notin D & \mbox{\rm and } u \in R\\
      (b,d,load+b)  &   \mbox{\rm if} & u \notin D & \mbox{\rm and } u \notin R\\
   
    \end{array}
\right.
\end{equation}
where the    \shadow\ on arc $a^{u_1}$ is $ (b,d,load) $.
\end{lemma}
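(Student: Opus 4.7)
The plan is to carry out a case analysis on whether $u \in D$ and whether $u \in R$, matching the three cases of the statement (the sub-case $u \in D \cap R$ being absorbed into the first). Before entering the case analysis, I would set up the bookkeeping that governs the updates of the three window coordinates. Since $u$ has the unique child $u_1$, the arcs associated with $A^u$ are exactly those associated with $A^{u_1}$ together with $(u,u_1)$; reading the third coordinate of the window on $a^u$ as also including the usage of $a^u$ itself (which is what is needed for the recurrence in the statement to close), one obtains the identity
\[
load(a^u) \;=\; load(a^{u_1}) \;+\; \mbox{pn}(u),
\]
together with the obvious update $|D^u| = |D^{u_1}| + 1$ if $u \in D$ and $|D^u| = |D^{u_1}|$ otherwise.

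Second, I would compute $\mbox{pn}(u)$ case by case by invoking the four constraints defining $\mathcal{S}(D)$. If $u \in D$, constraint~4 forbids any path from strictly passing through $u$, so the $b$ paths that enter $A^{u_1}$ cannot come from above $u$; constraint~3 then forces them all to originate at $u$ and, simultaneously, forces one path to terminate at $u$, while constraint~2 bounds that number from above by one, which yields $\mbox{pn}(u)=1$. If $u \notin D$ and $u \in R$, the $b$ paths reaching $u_1$ simply traverse $u$ (nothing stops them at $u$ since $u \notin D$) and constraint~1 adds exactly one additional path terminating at $u$, so $\mbox{pn}(u) = b+1$. If $u \notin D$ and $u \notin R$, the node $u$ is purely transit and $\mbox{pn}(u) = b$.

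Plugging these three values of $\mbox{pn}(u)$ into the $load$ recurrence and combining with the update rule for $|D^u|$ gives the three lines of formula~(\ref{eq:1}). I expect the first case to be the main obstacle: to conclude $\mbox{pn}(u)=1$ rather than $0$ or some larger number, one must chain constraints~2, 3, and~4 simultaneously and tacitly use that $A^{u_1}$ contains at least one destination or diffusing node, so that $b \geq 1$. The remaining two cases, and the coordinate-wise updates for $d$ and $load$, are essentially routine bookkeeping once this pivotal computation has been made.
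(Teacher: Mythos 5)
Your proposal is correct and follows essentially the same route as the paper: a case analysis on $u\in D$, $u\notin D\wedge u\in R$, $u\notin D\wedge u\notin R$, computing $\mbox{pn}(u)$ in each case and then updating $d$ and the load by $\mbox{pn}(u)$. The only difference is cosmetic — you derive $\mbox{pn}(u)=1$ for a diffusing node explicitly from the four constraints on $\mathcal{S}(D)$, whereas the paper takes this as immediate from the definition of the path number.
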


\begin{proof}
First, we assume that $u \in D$ ($u$ is a diffusing node). By definition of {\border}, arc $a^u$ has a {\border} equal to one.
Since the    \shadow\ on arc $a^{u_1}$ is $ (b,d,load) $, tree $A^{u}$ contains one more diffusing node than tree $A^{u_1}$.

Second, we assume that $u \notin D$.  So, $u$ is not a diffusing
node and tree $A^{u}$ contains exactly the same set of diffusing nodes as   tree $A^{u_1}$. If $u \in R$, then $u$ is the final extremity of exactly one
path. So the {\border} on arc $a^u$ is equal to the {\border}  on arc $a^{u_1}$  plus one. 
 If $u \notin R$, then  the {\border} on arc $a^u$ is equal to the {\border}  on arc $a^{u_1}$.

 From these  statements, we can compute the load of the solution $\mathcal{S}$  induced by $D$ in tree $A^u$. The load increases by the {\border} on arc $a^u$.

\end{proof}


Now, using the same arguments in the proof of Lemma \ref{lemma:1child},  we extend it  when $u$ has several children.

\begin{lemma}  \label{lemma:Lchilds}
   Let $u$ be a node  of $A$ having $\ell$ children $u_1\dots,u_\ell$. The    \shadow\ on arc $a^{u}$ is
equal to
\begin{equation}
  \label{eq:2}
  \left\{
    \begin{array}[]{llllll}
      (1,&1+ \sum_i^\ell d_i,&\sum_i^\ell load_i+1)  &   \mbox{\rm if} & u \in D  & \\
      (1+ \sum_i^\ell b_i,& \sum_i^\ell d_i ,& \sum_i^\ell load_i+b+1 )  &   \mbox{\rm if} & u  \notin D & \mbox{ and } u \in R\\
      (\sum_i^\ell b_i, &\sum_i^\ell d_i,&\sum_i^\ell load_i +b  )  &   \mbox{\rm if} & u \notin D & \mbox{ and } u \notin R\\
    \end{array}
\right.
\end{equation}
where the    \shadow\ on arc $a^{u_i}$ is $ (b_i,d_i,load_i) $ for any $i$, $1\leq i \leq \ell$.
  
\end{lemma}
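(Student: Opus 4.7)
My plan is to mimic the case analysis of Lemma \ref{lemma:1child}, replacing the single child $u_1$ by the $\ell$ children $u_1,\ldots,u_\ell$. The structural observation that makes this work is that the subtrees $A^{u_1},\ldots,A^{u_\ell}$ share only the vertex $u$ and are otherwise pairwise disjoint, so a path of $\mathcal{S}(D)$ that enters some $A^{u_i}$ cannot interact with a different $A^{u_j}$: each such path either passes through $u$ and then continues into a unique child subtree, or originates at $u$ (which is only possible when $u \in D$). Consequently, the restriction of $\mathcal{S}(D)$ to each $A^{u_i}$ is exactly the local solution giving the window $(b_i,d_i,load_i)$, and the quantities counted by the window decompose additively across $i$.

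I will compute the three components of the window on $a^u$ separately, using the abbreviation $b := \sum_{i=1}^\ell b_i$. The diffusing-node count is immediate: $D^u$ is the disjoint union of the $D^{u_i}$ together with $\{u\}$ when $u \in D$, hence $|D^u| = 1+\sum_i d_i$ in the first case of the formula and $|D^u| = \sum_i d_i$ in the other two. For the path number $\mbox{pn}(u)$, I distinguish the three cases: when $u\in D$, the constraints on $\mathcal{S}(D)$ force $u$ to be only a path endpoint, so exactly one path terminates at $u$ and none passes through it, giving $\mbox{pn}(u)=1$; when $u\notin D$, every path using $a^u$ continues through $u$ into exactly one child arc $a^{u_i}$, so the $b_i$ paths on each $a^{u_i}$ together account for $\sum_i b_i = b$ paths through $u$, to which a single terminating path is added when $u\in R$, yielding $\mbox{pn}(u)=b$ or $\mbox{pn}(u)=b+1$ accordingly.

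For the load I use the same observation as in Lemma \ref{lemma:1child}: moving up from the children's subtrees to $A^u$ adds exactly one new arc, namely $a^u$, whose multiplicity equals $\mbox{pn}(u)$. Hence the load in $A^u$ equals $\mbox{pn}(u)+\sum_i load_i$, which reproduces $1+\sum_i load_i$, $b+1+\sum_i load_i$ and $b+\sum_i load_i$ in the three respective cases. The main subtlety, I expect, is justifying crisply that no path can branch at a non-diffusing vertex, so that the number of paths passing through $u$ truly equals $\sum_i b_i$ (neither more nor less) whenever $u\notin D$; once this additivity is established, the remainder is a direct replay of the one-child proof carried out $\ell$ times in parallel.
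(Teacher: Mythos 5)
Your proof is correct and follows essentially the approach the paper intends: the paper gives no separate proof for this lemma, stating only that it follows ``using the same arguments'' as Lemma~\ref{lemma:1child}, and your write-up is precisely that extension, with the additional (and welcome) explicit justification that the child subtrees are pairwise disjoint so the path numbers, diffusing-node counts and loads add up across the $\ell$ children.
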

 
Now, we want to compare two solutions by  introducing   a partial order for each node.

\begin{definition}
  Let $D$ and $D'$ be two subsets of vertices in $A$. Let $v$ be a
  vertex.  $\mathcal{S}(D) \preceq_v \mathcal{S}(D')$ if and only if
  the following three conditions are  simultaneously satisfied: (i) $b
  \leq b'$; (ii) $d \leq d'$ (iii) $load\leq load'$, where the \shadow\
  of the solution   $\mathcal{S}(D)$ (respectively   $\mathcal{S}(D')$) on arc $a^{v}$ is $
  (b,d,load) $ (respectively $ (b',d',load') $).
\end{definition}

\begin{property} \label{prop:order}
  Let $u$ be a vertex in $A$. Let $D$ and $D'$ be two subsets of  vertices in  $A$ such that $\mathcal{S}(D) \preceq_u  \mathcal{S}(D')$. Then the solution induced by    $D''$, where $D''=(D'\setminus D'^u)   \cup  D^u$, satisfies the following property
  \begin{center}
     $\mathcal{S}(D'')  \preceq_v  \mathcal{S}(D')$ for all nodes $v$ not in $A^u$
  \end{center}
\end{property}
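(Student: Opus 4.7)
The plan is to partition the nodes $v\notin V_{A^u}$ into two classes according to their position relative to $u$: (i) nodes $v$ whose subtree $A^v$ is disjoint from $A^u$, and (ii) strict ancestors of $u$. Case~(i) will be immediate from the construction of $D''$, while case~(ii) will be handled by induction on the distance from $u$, walking up the tree and applying Lemma~\ref{lemma:Lchilds} at each step.

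The starting observation is that by construction $D''\cap V_{A^u}=D^u$ and $D''\setminus V_{A^u}=D'\setminus V_{A^u}$; in particular, the \shadow\ of $\mathcal{S}(D'')$ on any arc $a^v$ depends only on the intersection of $D''$ with $V_{A^v}$. In case~(i) one has $V_{A^v}\cap V_{A^u}=\emptyset$, so $D''\cap V_{A^v}=D'\cap V_{A^v}$ and the two \shadow s coincide, yielding the three inequalities defining $\preceq_v$ trivially with equality. For the base of the induction in case~(ii), the \shadow\ of $\mathcal{S}(D'')$ on $a^u$ equals that of $\mathcal{S}(D)$, which by hypothesis is $\preceq$ the \shadow\ of $\mathcal{S}(D')$ on $a^u$.

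For the inductive step, I would let $w$ be an ancestor of $u$ with children $w_1,\dots,w_\ell$, and say $w_j$ is the child lying on the path from $w$ to $u$; the remaining children have subtrees disjoint from $A^u$, so by case~(i) their \shadow s are identical under $D''$ and $D'$. By the inductive hypothesis, the \shadow\ of $\mathcal{S}(D'')$ on $a^{w_j}$ is $\preceq$ that of $\mathcal{S}(D')$. Since $w\notin V_{A^u}$, one has $w\in D''\iff w\in D'$, and membership in $R$ is intrinsic to $w$, so the same branch of Lemma~\ref{lemma:Lchilds} applies to both solutions. The main obstacle will be to verify that each coordinate of the output \shadow\ in~(\ref{eq:2}) is a coordinate-wise non-decreasing function of the children's \shadow s, up to an additive constant identical on the two sides. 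A glance at the three cases of~(\ref{eq:2}) confirms this --- every entry is a sum of the corresponding entries of the children, possibly incremented by~$1$, and the branch is selected by conditions depending only on $w$ itself --- so the $\preceq_w$-inequality propagates and the induction terminates once all ancestors of $u$ have been exhausted.
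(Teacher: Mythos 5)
Your proof is correct and follows essentially the same route as the paper's: the same partition of the nodes outside $A^u$ into those off the root-to-$u$ path (where the windows of $\mathcal{S}(D'')$ and $\mathcal{S}(D')$ coincide because $D''$ and $D'$ agree on the relevant subtree) and the ancestors of $u$ (handled by induction up the path using Lemma~\ref{lemma:Lchilds}). Your explicit observation that every coordinate in~(\ref{eq:2}) is a coordinate-wise non-decreasing function of the children's windows is precisely the point the paper leaves implicit when it writes the updated window as $(1,d'_t-d'+d,load'_t-load'+load)$ and says ``the same reasoning goes for each vertex of this path.''
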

\begin{proof}
Let $P$ be the path between root $e$ and vertex $v$.
 
First, we focus on vertices $v$ outside $A^u$ and not in $P$. Since ${D''^v}= {D'^v}$, arc $a^v$ has the same \shadow\  of  the solution induced by $D$  and of the    solution   $ \mathcal{S}(D')$. 

Second, we focus on vertices $v$ in $P$. By definition of the partial order $ \preceq_u$, we have (i) $b \leq b'$, (ii) $d \leq d'$, and 
  (iii) $load \leq load' $, where the \shadow\ of  the solution $\mathcal{S}(D)$ (respectively   $\mathcal{S}(D')$)  on arc $a^{u}$ is $ (b,d,load) $ (respectively $ (b',d',load') $).  
Now, we can compute the \shadow\ of  the solution  $\mathcal{S}(D'')$ on arc $a^{t}$ where $t$ is the father of node $u$.
Let  $ (b'_t,d'_t,load'_t) $ be the \shadow\ of  the solution  $\mathcal{S}(D')$ on arc $a^{t}$.

 From Lemma \ref{lemma:Lchilds},
 if $t \in D'$, then 
the \shadow\ of  the solution $\mathcal{S}(D'')$   on arc $a^{t}$ is $ (1,d'_t-d'+d,load'_t-load'+load) $. Thus  $\mathcal{S}(D'')  \preceq_t  \mathcal{S}(D')$. We can apply the same arguments as previously for the other case. The same reasoning  goes for each vertex of this path starting from the father of $t$ until the root.
This completes the proof of Property \ref{prop:order}.
\end{proof}

\begin{definition}
  Let $u$ be a vertex in $A$. Let $D$ be a subset of vertices in
  $A$. $D$ is \textit{sub-optimal} for  $A^u$ if and only if for any
  $D'$ which is a subset of vertices in $A$ such that $d =   d'$ and $b =  b'$, we have
  $load \leq load'$ where the \shadow\ of the solution 
$\mathcal{S}(D)$ (respectively  $\mathcal{S}(D')$)
 on arc $a^{u}$ is $ (b,d,load) $ (respectively $
  (b',d',load') $).
\end{definition}

\begin{property} \label{prop:optimal} Let $u$ be a vertex of $A$ having
  $\ell$ children $u_1\dots,u_\ell$. Let $D$ be a subset of vertices
  in $A$. If $D$  is \textit{sub-optimal} for node $u$, then $D$   is also \textit{sub-optimal} for node $u_i$, for any integer $i$, $1\leq i \leq \ell$.

\end{property}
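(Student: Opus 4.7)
The plan is to proceed by contradiction. Assume that $D$ is sub-optimal for $u$ but fails to be sub-optimal for some child $u_i$. Then there exists a subset $D'$ of vertices of $A$ such that the windows of $\mathcal{S}(D)$ and $\mathcal{S}(D')$ on $a^{u_i}$, say $(b_i,d_i,load_i)$ and $(b'_i,d'_i,load'_i)$ respectively, satisfy $b'_i=b_i$ and $d'_i=d_i$ but $load'_i<load_i$. The goal is to turn $D'$ into a global witness that contradicts the sub-optimality of $D$ at $u$ itself.

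The natural construction is a graft: I would define $D''=(D\setminus D^{u_i})\cup D'^{u_i}$, so that $D''$ agrees with $D$ outside $A^{u_i}$ and with $D'$ inside $A^{u_i}$. A short induction using Lemmas \ref{lemma:1child} and \ref{lemma:Lchilds} shows that the window on any arc $a^v$ depends only on the diffusing nodes lying inside $A^v$. Consequently, the windows of $\mathcal{S}(D'')$ and $\mathcal{S}(D)$ coincide on every arc $a^{u_j}$ with $j\neq i$, while on $a^{u_i}$ the window under $D''$ equals the window under $D'$, namely $(b_i,d_i,load'_i)$.

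Next I would apply Lemma \ref{lemma:Lchilds} at $u$ to compute the window of $\mathcal{S}(D'')$ on $a^u$. In each of the three cases of that lemma, the first two coordinates of the window on $a^u$ depend only on whether $u$ lies in $D$ (equivalently $D''$) and in $R$, together with the sums $\sum_j b_j$ and $\sum_j d_j$ taken over the children; all of these quantities are unchanged when passing from $D$ to $D''$. The third coordinate is built from the same data plus $\sum_j load_j$, and this sum strictly decreases by exactly $load_i-load'_i>0$. Hence $\mathcal{S}(D'')$ has the same border and diffusion count as $\mathcal{S}(D)$ on $a^u$ but a strictly smaller load, contradicting the sub-optimality of $D$ at $u$. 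The only real obstacle is bookkeeping: carefully justifying the window-locality claim that legitimizes the graft and checking that the sums in Lemma \ref{lemma:Lchilds} indeed preserve $b$ and $d$ when a single summand in $\sum_j load_j$ is replaced; once these are in hand, the arithmetic makes the contradiction immediate.
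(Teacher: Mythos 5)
Your proposal is correct and follows essentially the same route as the paper: argue by contradiction, graft the better witness via $D''=(D\setminus D^{u_i})\cup D'^{u_i}$, and invoke Lemma~\ref{lemma:Lchilds} to see that the window on $a^u$ keeps the same border and diffusing-node count while its load strictly decreases. Your write-up is in fact more careful than the paper's (which conflates the windows on $a^{u_i}$ and $a^u$ in its notation), but the underlying argument is identical.
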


\begin{proof}
  We can prove this property by contradiction. Assume that there is at least  one child $u_i$
 of $u$ such that  $D$    is not {sub-optimal} for node $u_i$. So it implies that there exists a subset $D'$ such that  $D'$    is  {sub-optimal} for node $u_i$ and such that     $d=d'$ and $b=b'$, we have
  $load' < load$ where the \shadow\ of the solution $\mathcal{S}(D)$ (respectively  $\mathcal{S}(D')$) on arc $a^{u}$ is $ (b,d,load) $ (respectively $
  (b',d',load') $).  So, using Lemma~\ref{lemma:Lchilds}, we can construct a subset $D''$  such that   $D''=(D\setminus D^{u_i})   \cup  D'^{u_i}$ and such that      $\mathcal{S}(D'')  \preceq_u  \mathcal{S}(D')$.
So, it implies that $D$ is not optimal. So there is a contradiction.

\end{proof}

\section{Algorithm, its complexity and optimality \label{algo}}
Our algorithm is based on the dynamic approach. We introduce the notation used for its description. For any $u$, $u\in V_{A_\epsilon}$ in $A_\epsilon$ we define $A_\epsilon^u$ as a sub-tree of $A_\epsilon$ rooted in $u$. We also define two parameters of $u$ in $A_\epsilon$. The height $h(u)$ is a distance between $u$ and $e$ in $A_\epsilon$. We also note $h_{\max} = \max_{u\in V_{A_\epsilon}} h(u)$. The path number $\mbox{pn}(u)$ is a number of paths in a solution  $\mathcal{S}(D_{A_\epsilon})$ with a given set of diffusing nodes spanned on $A_\epsilon$  which pass through $u$ or which terminate in $u$. It is obvious that if $u$ is a branching node then $\mbox{pn}(u) = 1$. 

The idea of our algorithm is to compute for any  $u$, $u \in V(A)$,
some sub-optimal sets $D$ of diffusing nodes for $A^u$ where the
window of the solution $\mathcal{S}(D)$ on arc $a^u$ is $(b, d,
load)$. One set $D$ is constructed for any value $b$, $1\leq b\leq
|R^u|$, any value $d$, $0\leq d\leq k$.  As the reader might already
guess, a sub-optimal set $D$ for the root $e$ gives a solution to our
problem. We want therefore to find these sets starting from the leaves
and ending up in the root of $A$.  As $u$ may be or may not be a
diffusing node, we have to know how to compute the two sets  for both
the cases.
\begin{figure}
\begin{center}
\setlength{\parskip}{.2cm}
\setlength{\headsep}{-.3cm}
\setlength{\footskip}{-.3cm}
\fbox{
\parbox{14.0cm}{
\begin{tabbing}
\hspace{.2cm} \= \hspace{.2cm} \= \hspace{.2cm} \= \hspace{.2cm} \= \hspace{.2cm} \= \hspace{.2cm} \= \hspace{.2cm} \= \kill
\textbf{Procedure \texttt{Mat\_Vec\_Filling}} \\
$1.$ \textbf{If} $u$ is a leaf \textbf{then} attribute the ``unitary'' $M(u)$ and $L(u)$ to $u$ \textbf{endIf} \\
$2.$ \textbf{If} $u$ is not a leaf \textbf{then} \\ 
$3.$  \> \> choose arbitrarily $v$ which is one of the successors of $u$ in $A_{\epsilon}^{u}$; \\
$4.$ \> \> \texttt{First\_Succ\_Mat\_Vec(u,v)}; mark $v$; \\
$5.$ \> \> \textbf{While} there is a successor of $u$ in $A_{\epsilon}^{u}$ which has not be marked yet \textbf{do} \\
$6.$ \> \> \> choose arbitrarily  $w$ among the non-marked successors of $u$ in $A_{\epsilon}^{u}$; \\
$7.$ \> \> \> \texttt{Others\_Succ\_Mat\_Vec(u,w)}; mark $w$ \\
$8.$ \> \> \textbf{endWhile} \\
$9.$ \textbf{endIf} 
\end{tabbing} }
}
\end{center}
\caption{The procedure {\tt Mat\_Vec\_Filling} \label{Mat_Vec_Filling}}
\end{figure}

As $u$ may not be equipped with the branching property, the minimal load of the sub-optimal set for it should be stored in the matrix $M(u)$ whose rows are indexed by $\mbox{pn}(u)$ (these indices are $1,2,\ldots,|R|$) and whose columns are indexed by the number of diffusing nodes deployed in $A^u$ (these indices are $0,1,\ldots,k$). If a solution does not exist, the corresponding matrix element is equal to zero.

As $u$ may become a branching node, the minimal load of the sub-optimal set can be stored in a line vector $L(u)$ because the path number of a diffusing node is always equal to one. 

In a nutshell: $M_{i,j}(u) = \alpha \neq 0$ ($L_{i}(u) = \alpha \neq 0$, respectively) if and only if a sub-optimal set $D$ exists in $A^u$  having its {\shadow}   on arc $a^u$ equal to $(j,i,\alpha)$.  (respectively to  $(1,i,\alpha)$).
For computational reasons the destinations $u$, which are leaves of $A$, have ``unitary'' matrix and vector attributed: $M_{1,0}(u)=1$, $L_1(u)=1$ and all other elements are zero.

As we have said above, our algorithm to solve the DNMTP attributes to each node $u$ its $M(u)$ and $L(u)$ starting from the leaves whose height is $H=h_{\max}$ and performing the bottom-up scheme with $H=H-1$ until the root is reached ($H=0$). The attribution of $M(u)$ and $L(u)$  to $u$ is realised by the procedure \verb+Mat_Vec_Filling+ (Fig.~\ref{Mat_Vec_Filling}). This procedure takes a node $u$ and its corresponding sub-tree as data. Intuitively speaking, this is a modified breadth-first search~\cite{Knu97} in which one arbitrarily chosen successor, treated first, computes its matrix and vector (the \verb+First_Succ_Mat_Vec+ procedure) in a different way from its brothers (the \verb+Others_Succ_Mat_Vec+ procedure). The leaves have the ``unitary'' matrix and vector assigned.
\begin{figure}
\begin{center}
\setlength{\parskip}{.2cm}
\setlength{\headsep}{-.3cm}
\setlength{\footskip}{-.3cm}
\fbox{
\parbox{14.0cm}{
\begin{tabbing}
\hspace{.2cm} \= \hspace{.2cm} \= \hspace{.2cm} \= \hspace{.2cm} \= \hspace{.2cm} \= \hspace{.2cm} \= \hspace{.2cm} \= \kill
\textbf{Procedure \texttt{First\_Succ\_Mat\_Vec(u,v)}} \\ 
$1.$ $L_1 (u) = \min^+_j(M_{0,j}(v))$\\
$2.$ \textbf{ForAll} $i$ such that $0 < i \leq k$ \textbf{do} \\
$3.$ \> \> $L_i(u) = 1+ \min^+ (\min^+_j (M_{i-1,j}(v), L_{i-1}(v))$ \\
$4.$ \textbf{endForAll} \\
$5.$ \textbf{ForAll} $i$ such that $0 \leq i \leq k$ \textbf{do} \\
$6.$ \> \> \textbf{ForAll} $j$ such that $0 < i \leq |R|$ \textbf{do} \\

$7.$ \> \> \> \textbf{If} $j==1$ \textbf{then} $\mbox{\tt elT} = 1 + \min^+ (M_{i,1}(v),L_i(v))$\\
$8.$ \> \> \> \textbf{else} $\mbox{\tt elT} = j + M_{i,j}(v)$\\
$9.$ \> \> \> \textbf{endIf};\\
$10.$ \> \> \> \textbf{If} $u$ is destination of multicast $\epsilon$ \textbf{then} \\ 
$11.$ \> \> \> \> $M_{i,j+1}(u) = \mbox{\tt elT}+1$ \textbf{else} $M_{i,j}(u) = \mbox{\tt elT}$ \\ 
$12.$ \> \> \> \textbf{endIf} \\
$13.$ \> \> \textbf{endForAll} \\
$14.$\textbf{endForAll} 
\end{tabbing} 
}}
\end{center}
\caption{The procedure {\tt First\_Succ\_Mat\_Vec} \label{First_Succ_Mat_Vec}}
\end{figure}

The procedure \verb+First_Succ_Mat_Vec+ operates on a node $u$ and one of its successors $v$ for which $M(v)$ and $L(v)$ are already known as \verb+Mat_Vec_Filling+ follows a bottom-up approach (Fig.~\ref{First_Succ_Mat_Vec}). It uses the variable \verb+elT+ to store the non-zero elements in a column $i$ of $M(u)$ and $L(u)$. The procedure executes the function $\min^+$ whose two arguments are natural. It returns a minimum of these two values in exception of the case in which one of the arguments is zero. The  other positive argument is when returned. The main idea is based on the observation that the weight of the multicast tree in $A^v \cup \{u\}$ is equal to the multicast weight in $A^v$ incremented by the weight of reaching $u$ which is itself equal to $\mbox{pn}(u)$. Let us remind the reader that $\mbox{pn}(u) =1$ when $u$ is a diffusing node and $\mbox{pn}(u)$ is a matrix row index otherwise. 

\paragraph{Remark 1:} From Lemma \ref{lemma:1child} and Property \ref{prop:order}, we can deduce, that if $u$ has one child $v$ for any $i, \ j$, $0 \leq i \leq k$ and $1 \leq j \leq |R|$ 
\begin{itemize}
\item $L_{i}(u) = 1 + \min^+(L_{i-1}(v), \min^+\{M_{i-1,j}(v):j:1 \leq j \leq |R|\}$     

\item $M_{i,1}(u) = 1+  \min^+ (L_{i}(v), M_{i,1}(v) )  $ 
\item $M_{i,j}(u) = j +   M_{i,j'}(v)   $ where $j\neq 1$, and $j'=j-1$ if $u\in R$, otherwise $j'=j$     
\end{itemize}
The  procedure \verb+First_Succ_Mat_Vec+ computes the formul\ae{} here above.
\begin{figure}
\begin{center}
\setlength{\parskip}{.2cm}
\setlength{\headsep}{-.3cm}
\setlength{\footskip}{-.3cm}
\fbox{
\parbox{14.2cm}{
\begin{tabbing}
\hspace{.2cm} \= \hspace{.2cm} \= \hspace{.2cm} \= \hspace{.2cm} \= \hspace{.2cm} \= \hspace{.2cm} \= \hspace{.2cm} \= \hspace{.2cm} \= \hspace{.2cm} \= \hspace{.2cm} \= \hspace{.2cm} \= \hspace{.2cm} \= \kill
\textbf{Procedure \texttt{Others\_Succ\_Mat\_Vec(u,w)}} \\
$1.$ \textbf{ForAll} $i$ such that $0 < i \leq k$ \textbf{do} \\
$2.$ \> \> $\mbox{\tt elT} = \infty$; \\
$3.$ \> \> \textbf{ForAll} $(x,y)$ such that $(x,y) \in \{1,2,\ldots,k\} \times \{1,2,\ldots,k\}$ and  $x+y = i$ \textbf{do} \\
$4.$ \> \> \> \textbf{If} $(L_x(u) + L_y(w)) < \mbox{\tt elT}$ \textbf{then} $\mbox{\tt elT} = L_x(u)) + L_y(w)$ \textbf{endIf};\\
$5.$ \> \> \> \textbf{If} $(L_x(u) + \min^+_j(M_{y,j}(w))) < \mbox{\tt elT}$ \textbf{then} $\mbox{\tt elT} = L_x(u)) + \min^+_j(M_{y,j}(w))$\\
$6.$ \> \> \>  \textbf{endIf};\\
$7.$ \> \> \textbf{endForAll} \\
$8.$ \> \> $V'_i(u) = \mbox{\tt elT}$ \\
$9.$ \textbf{endForAll};\\ 
$10.$ \textbf{ForAll} $i$ such that $0 \leq i \leq k$ \textbf{do} \\
$11.$ \> \> $\mbox{\tt elT} = \infty$; \\
$12.$ \> \> \textbf{ForAll}  $(x,y)$ such that $(x,y) \in \{0,1,\ldots,k\} \times \{0,1,\ldots,k\}$ and  $x+y = i$ \textbf{do} \\
$13.$ \> \> \> \textbf{ForAll} $j$ such that $0<j\leq |R|$ \textbf{do} \\
$14.$ \> \> \> \> $\mbox{\tt elT} = \infty$; \\
$15.$ \> \> \> \> \textbf{ForAll}  $(a,b)$ such that $(a,b) \in \{0,1,\ldots,k\} \times \{0,1,\ldots,k\}$ and  $a+b = j$ \textbf{do} \\
$16.$ \> \> \> \> \> \textbf{If} $M_{x,a}(u) + M_{y,b}(u)+b < \mbox{\tt elT} $ \textbf{then} \\
$17.$ \> \> \> \> \> \>  $\mbox{\tt elT}= M_{x,a}(u) + M_{y,b}(u)+b $\\
$18.$ \> \> \> \> \> \textbf{endIf} \\
$19.$ \> \> \> \> \textbf{endForAll};\\
$20.$ \> \> \> \> $M'_{i,j}(u) = \min^+(\mbox{\tt elT}, M_{x,j-1}(u)+L_y(w)+1)$\\
$21.$ \> \> \> \textbf{endForAll}\\
$22.$ \> \> \textbf{endForAll}\\
$23.$ \textbf{endForAll}\\
$24.$ $L(u) \leftarrow L'(u)$; $M(u) \leftarrow M'(u)$;   
\end{tabbing} 
}
}
\end{center}
\caption{The procedure {\tt Others\_Succ\_Mat\_Vec} \label{Others_Succ_Mat_Vec}}
\end{figure}

In lines 1--4 $L(u)$ is computed for $u$ seen as a diffusing
node. On the $i^{\mbox{\scriptsize th}}$ step the smallest positive
weight is chosen between weights of its predecessor $v$ seen as a
diffusing and a non-diffusing node. These weights are taken for $v$
with one diffusing node less because $u$ itself is diffusing. This
weight is increased by the weight of reaching $u$ which is one as $u$
is diffusing.  Lines 5--9 fill up $M(u)$ when $u$ is seen as
non-diffusing. Line 7 treats the case in which only one path
passes through or terminates in $u$. The successor of $u$ can be either
a diffusing or non-diffusing node. Otherwise (line 8) its successor
has to be a non-diffusing node. The case in which $u$ is a destination
despite the fact that it is not a leaf in ${A}$ is treated in
lines 10--12 as the weight of the access to $u$ has to be added.

\verb+Others_Succ_Mat_Vec+
(Fig.~\ref{Others_Succ_Mat_Vec}) operates on a node $u$ and its
successors $w$ different from $v$ which has already been 
examined in \verb+First_Succ_Mat_Vec+. The procedure uses the variable
\verb+elT+ as \verb+First_Succ_Mat_Vec+ does. Furthermore, the
procedure makes use of the auxiliary variables $M'(u)$ and $L'(u)$ to
store the new values of $M(u)$ and $L(u)$ as the current elements of
$M(u)$ and $L(u)$ are still in use. The procedure
\verb+Others_Succ_Mat_Vec+ is built up on the same principle as the
previous one. Lines 1--9 treat the filling up of $L(u)$ and
lines 10--23 treat the filling up of $M(u)$. The important difference
consists in traversing all the couples $(x,y), x,y=1,2\ldots,k$ or
$x,y=0,1,\ldots,k$ such that $x+y=i$. It leads from the fact that this
time the weight of the multicast tree in $A^v \cup A^v \cup \{u\}$ is
equal to the sum of the multicast weights in $A^v \cup \{u\}$ and in
$A^w$ with the branching nodes deployed in both $A^v \cup \{u\}$
and $A^w$. The matrix computation also requires an appropriate path number
in order to determine the additional tree weight (lines 15--19).

\paragraph{Remark 2:} From Lemma \ref{lemma:Lchilds} and Property \ref{prop:optimal}, we can deduce, that if $u$ has $\ell$ children $u_1,\dots, u_\ell$ for any $i, \ j$, $0 \leq i \leq k$ and $1 \leq j \leq |R|$ 
{\small
\begin{itemize}
\item $L^{\ell}_{i}(u) =  \min^+ \{L^{\ell-1}_{i'}(u) +L_{i''}(u_\ell) , L^{\ell-1}_{i'}(u) + \min^+_{j:1 \leq j \leq |R|} M_{i'',j}(v): i'+i''=i\}$     
\item $M^{\ell}_{i,j}(u) = \min^+\left(\begin{array}{@{}ll@{}}
     \min^+ \{ M^{\ell-1}_{i',j'}(u) +  M_{i'',j''}(u_\ell) +j'' & : i'+i''=i \land \  j'+j''=j\}, \\
     \min^+ \{ M^{\ell-1}_{i',j-1}(u) +  L_{i''}(u_\ell) +1 & : i'+i''=i \}, \\
  \end{array}\right)$    
\end{itemize}
}
where $M^{f}(u)$ and   $L^{f}(u)$ correspond to the matrix and the vector computed by the algorithm for the sub-tree of $A^u$ where $u$ has only $f$ children  $u_1,\dots, u_f$.

The  procedure \verb+Others_Succ_Mat_Vec+ computes the formul\ae{} here above.



\begin{theorem}
The optimal set of diffusing nodes is obtained by the configuration associated to $min_{i:1\leq i \leq k }L_i(e)$.  Its complexity is  ${\cal O}(k^2 |R|^2 |V_{A}|)$.
\end{theorem}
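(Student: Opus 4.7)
The plan is to prove correctness by a bottom-up induction on the tree $A$, showing that after \verb+Mat_Vec_Filling+ has processed $u$, the tables $M(u)$ and $L(u)$ record the minimum load of a sub-optimal set of diffusing nodes for $A^u$ realising each admissible window on $a^u$; the complexity then follows from a per-node operation count combined with $\sum_u \deg(u) = O(|V_A|)$ for a tree.

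Formally, the invariant I would maintain is: after processing $u$, $M_{i,j}(u)$ (respectively $L_i(u)$) is non-zero exactly when some $D \subseteq V_{A^u}$ with $|D|=i$ has window $(j,i,\mbox{load})$ (respectively $(1,i,\mbox{load})$) on $a^u$, and in that case stores the minimum such load, witnessed by a set that is \emph{sub-optimal} for $A^u$ in the sense of Section~\ref{properties}. The base case is the ``unitary'' assignment at leaves, which trivially satisfies the invariant. For the inductive step, I split on the number of children of $u$. If $u$ has a single child $v$, the three cases of Lemma~\ref{lemma:1child} coincide with the formulae of Remark~1 that \verb+First_Succ_Mat_Vec+ evaluates; the induction hypothesis applied at $v$, together with Property~\ref{prop:optimal} (any sub-optimal set for $A^u$ restricts to one for $A^v$), guarantees that no better candidate is missed.

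If $u$ has several children, I would proceed by a secondary induction on the number of children already merged into $u$: the first is handled by \verb+First_Succ_Mat_Vec+ as above, and each further child $w$ is folded in by \verb+Others_Succ_Mat_Vec+, whose nested loops enumerate every decomposition $i'+i''=i$ and $j'+j''=j$ of the diffusing-node budget and of the path number between the partial tree assembled so far and $A^w$. Remark~2, which translates the multi-child form of Lemma~\ref{lemma:Lchilds}, certifies that these are the only recurrences compatible with a sub-optimal window on $a^u$; Property~\ref{prop:order} further justifies that the ``swap'' $D''=(D'\setminus D'^u)\cup D^u$ implicit in the merge preserves sub-optimality on all other children, so a coordinate-wise minimum over decompositions is safe. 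Applying the invariant at the root $e$, which is itself the unique origin of every outgoing path and therefore corresponds to the $L$-case (path number equal to one), the optimal load equals $\min_{1\le i \le k} L_i(e)$, and the witnessing $i^\star$ with its recorded decompositions unwinds into the optimal set $D$.

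For the complexity, \verb+First_Succ_Mat_Vec+ runs in $O(k|R|)$ time after an $O(k|R|)$ precomputation of column-wise minima on $M(v)$. The dominating step is \verb+Others_Succ_Mat_Vec+, which for each of $O(k|R|)$ entries $M_{i,j}(u)$ enumerates $O(k|R|)$ pairs $(i',j')$, yielding $O(k^2|R|^2)$ per merged child; since the total number of child-merges across $A$ equals $|V_A|-1$, the overall cost is $O(k^2|R|^2|V_A|)$, matching the claim. The main obstacle I anticipate is the multi-child inductive step: I must verify that the local ``minimum over decompositions $i'+i''=i$, $j'+j''=j$'' faithfully covers the space of sub-optimal sets for $A^u$, i.e.\ that every sub-optimal $D$ for $A^u$ lies in the image of the algorithm's enumeration. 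This reduces to Property~\ref{prop:optimal}, which exactly licenses decomposing the optimisation across children, so most of the rigour will be spent ensuring the merge operator respects the window arithmetic of Lemma~\ref{lemma:Lchilds} at every intermediate partial merge.
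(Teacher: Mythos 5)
Your proposal is correct and follows essentially the same route as the paper: a bottom-up induction establishing that $M(u)$ and $L(u)$ store the loads of sub-optimal sets for every window, justified by Lemmas~\ref{lemma:1child} and~\ref{lemma:Lchilds} together with Properties~\ref{prop:order} and~\ref{prop:optimal} (i.e.\ Remarks~1 and~2), then reading the answer off $L(e)$. The paper's own proof is only a two-line appeal to Remarks~1 and~2; your version merely makes the induction, the exchange argument for the child merges, and the $O(k^2|R|^2)$-per-merge complexity count explicit.
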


\begin{proof}
From Remarks 1 and 2, we can deduce from any $u$ in $V(A)$,  the algorithm \verb+Mat_Vec_Mat_Filling+ computes vector $L(u)$ and matrix $M(u)$ such that $\forall b, \ 1 \leq b \leq |R|$,   $\forall d, \  0 \leq d \leq k$, thus there are two sub-optimal sets  of diffusing nodes: one has  load $L_d(u)$ and   the other has  load $M_{d,b}(u)$.
\end{proof}


\section{Numerical results \label{Results}}
Our algorithm determines the optimal localizations for $k$ diffusing nodes in a multicast tree which has already been created for a request $\epsilon = (e,R)$. As we have signalled in Section~\ref{TreeConstr} there are numerous methods of construction of these trees.  We selected two heuristic methods in order to observe their impact on the efficiency of our algorithm.   The first one establishes a shortest path (ShP) between $e$ and each $r\in R$. The corresponding multicast tree $A_\epsilon^{\mbox{\scriptsize ShP}}$ is a union of these shortest paths. The second method, which is based on the $2$-approximable solution of the Steiner tree problem proposed in~\cite{TakMat80}, gives $A_\epsilon^{\mbox{\scriptsize StT}}$ tree. This Steiner problem formalized in terms of multicast demand can be written as:
\begin{definition}
{\bf Steiner Tree Problem (StTP)}\\
{\bf Data: } a connected undirected graph $G=(V,E)$, a multicast request $\epsilon = (e,R)$, and a natural number $k$.\\
{\bf Question: } Does a rooted tree $A_\epsilon$ exist such that the number of its arc is less than or equal to $k$?
\end{definition}
The heuristic algorithm~\cite{TakMat80} leans on polynomial algorithms of a minimum-weight spanning tree~\cite{Kru56} and of a shortest path~\cite{Dij59} coupled.

To generate a graph of $200$ nodes we apply the Waxman model~\cite{Wax88} of BRITE~\cite{MedLak+01} (with default parameters). We estimate with the $5$\% precision at the significance level $\alpha=0.05$ the average weight of multicast tree as a function of the destination number for both the algorithms which construct a tree. For each number of destinations we choose uniformly in $V$ a multicast source $e$ and next, we select the destinations of this source according to the uniform distribution in $V-\{e\}$.  
\begin{figure}
 \begin{minipage}[b]{.46\linewidth}
\begin{center}
\includegraphics[height=1.75in,width=2.0in]{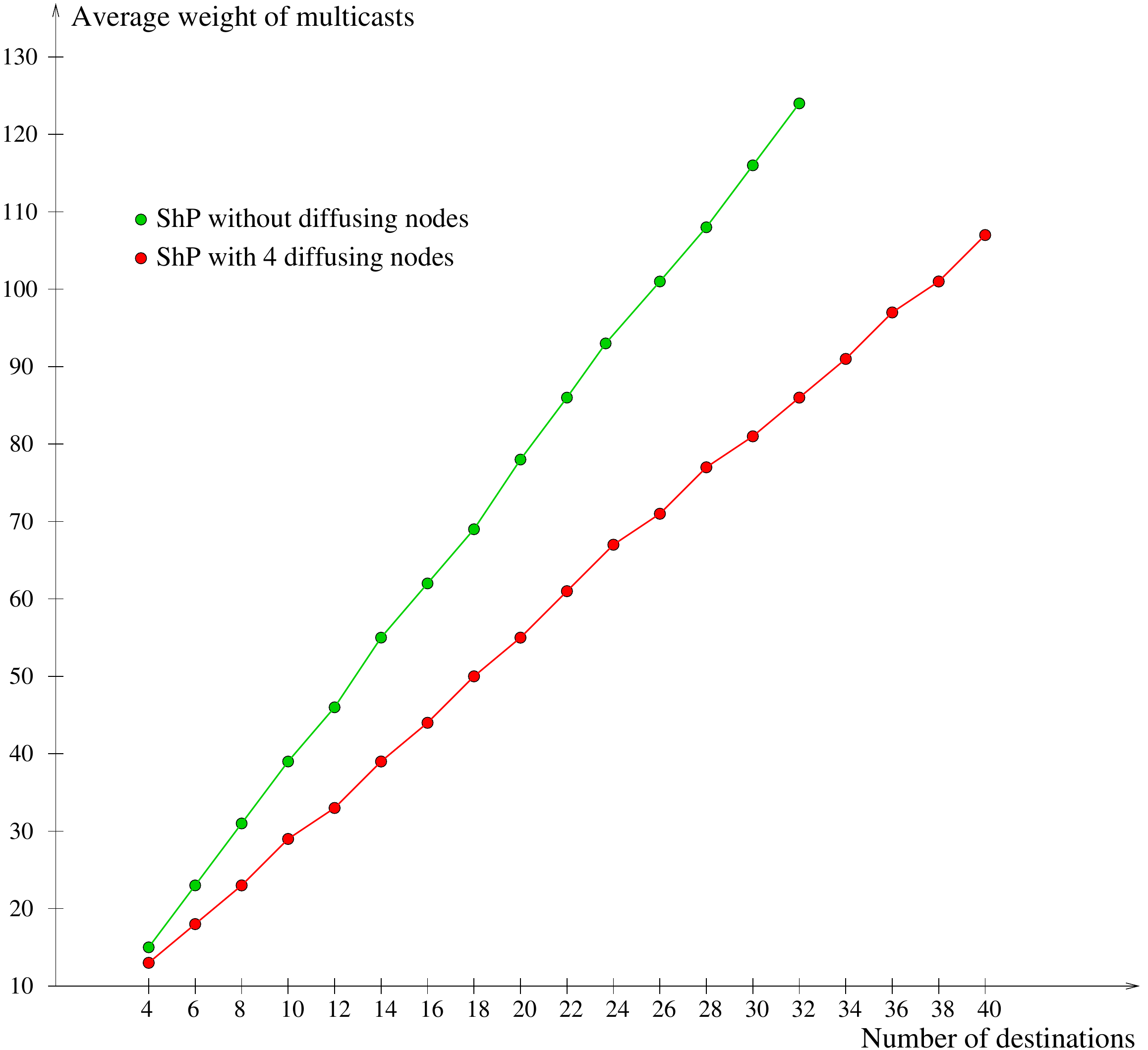}
\caption{Average $A_\epsilon^{\mbox{\scriptsize ShP}}$ weight as a function of the number of destinations with and without diffusing nodes\label{ShP1}}
\end{center}
 \end{minipage} \hfill
 \begin{minipage}[b]{.46\linewidth}
\begin{center}
\includegraphics[height=1.75in,width=2.0in]{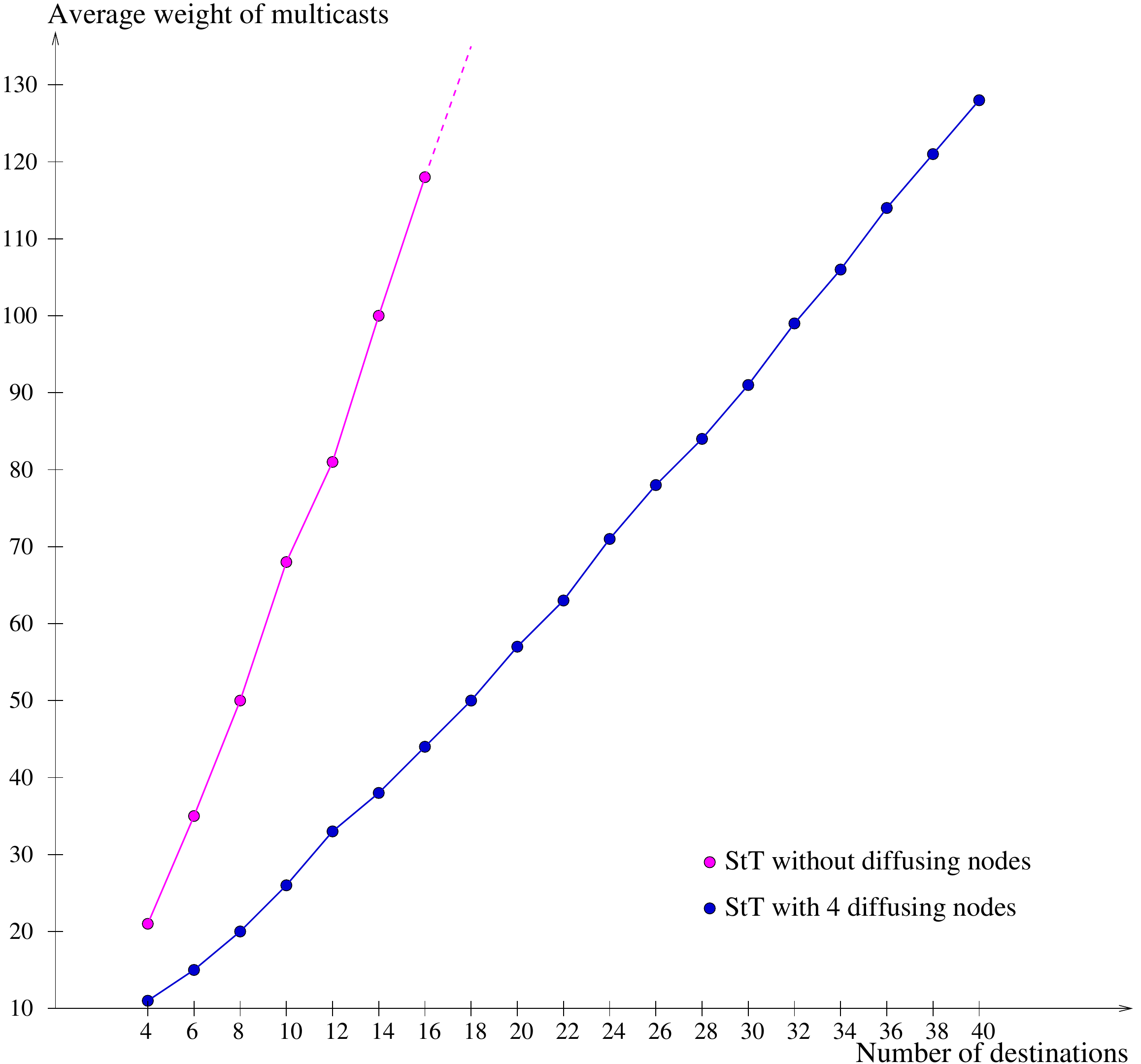}
\caption{Average $A_\epsilon^{\mbox{\scriptsize StT}}$ weight as a function of the number of destinations with and without diffusing nodes\label{StT1}}
\end{center}
 \end{minipage}
\end{figure}

 In order to perceive the impact of diffusing nodes on the tree weight we perform the computations with four nodes placed by our algorithm, and without them. In Fig.~\ref{ShP1} we observe that the weight reduction obtained for ShP with the diffusing nodes is significant (about $31$\% for $32$ destinations). The improvement obtained by the introduction of diffusing nodes into the trees built with StT (Fig.~\ref{StT1}) is even more substantial than in the previous case (about $65$\% for $16$ destinations). These two figures exhibit that ShP generates trees whose weight is less than those generated by StT. This fact is not astonishing as ShP always chooses a shortest path between the source and any destination.

In Fig.~\ref{StP1StT1}, in which  the relative difference between $A_\epsilon^{\mbox{\scriptsize ShP}}$  and $A_\epsilon^{\mbox{\scriptsize StT}}$ weights as StT tree weight percentage is depicted, we notice, however, that  this tendency is inverse for multicast trees with few destination (up to $18$). To explain this phenomenon we notice that 1) with a small number of destinations the shortest paths identified by ShP are disjoint, and 2) typically, the edges of a tree obtained by shortest paths are more numerous that those of a Steiner tree computed for an identical multicast demand.
\begin{figure}
\begin{center}
\includegraphics[height=1.25in,width=2.75in]{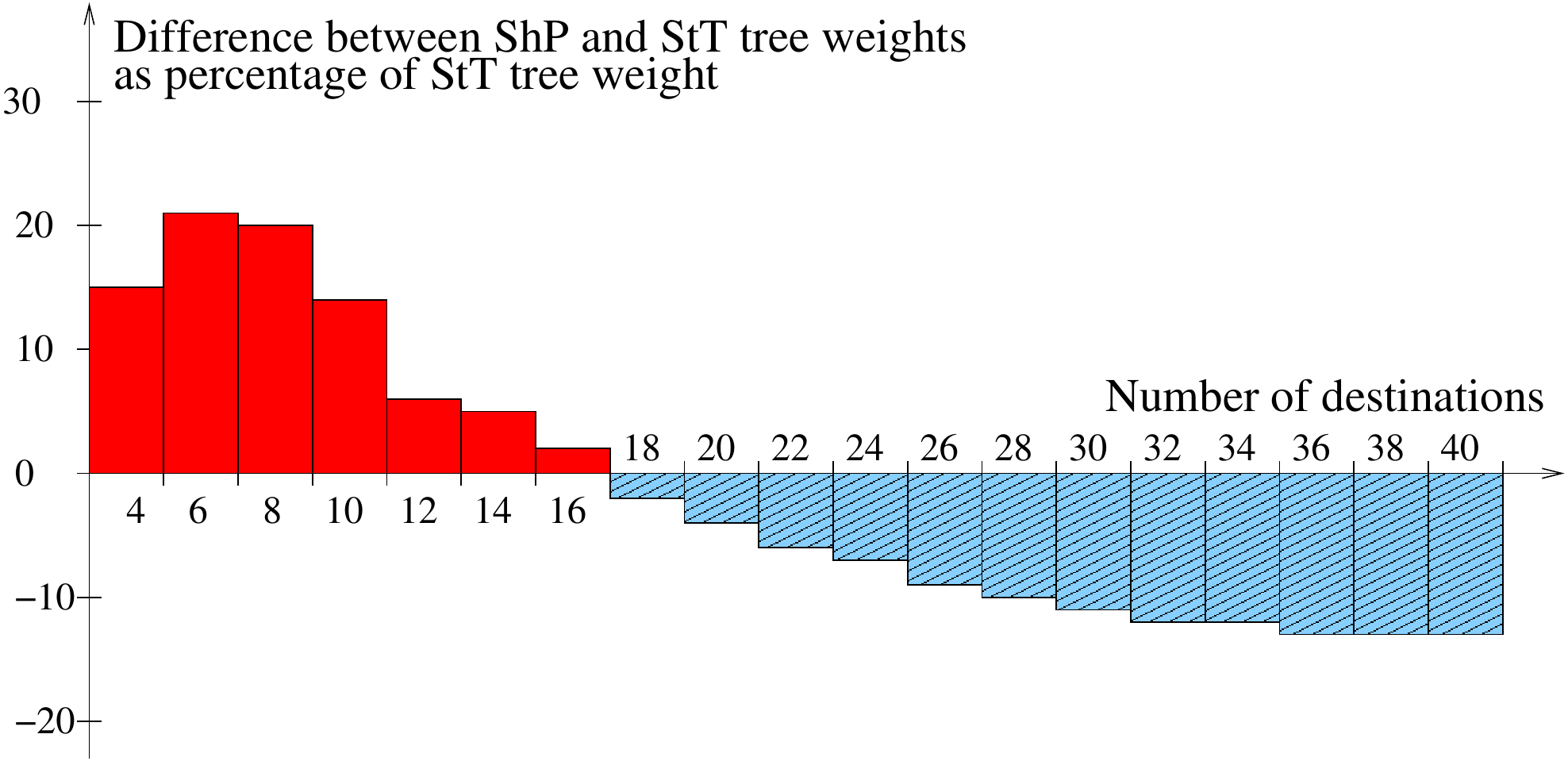}
\caption{Difference between weights of $A_\epsilon^{\mbox{\scriptsize ShP}}$ and $A_\epsilon^{\mbox{\scriptsize StT}}$ as $A_\epsilon^{\mbox{\scriptsize StT}}$ weight percentage with four diffusing nodes as a function of the number of destinations \label{StP1StT1}}
\end{center}
\end{figure}

We now fix the number of destinations to $20$ and we estimate the weights of trees obtained with ShP and StT algorithms in function of the number of branching nodes. We remind the reader that for $20$ destinations and $4$ diffusing nodes ShP turned out to be slightly more efficient than StT.  Figs.~\ref{ShP2} and~\ref{StT2} also show the average weight of ShP and StT trees estimated with the absence of diffusing nodes. In accordance with the comment made above in the context of the absence of diffusing nodes, ShP trees are almost twice as good as StT ones.
\begin{figure}
 \begin{minipage}[b]{.46\linewidth}
\begin{center}
\includegraphics[height=1.75in,width=2.0in]{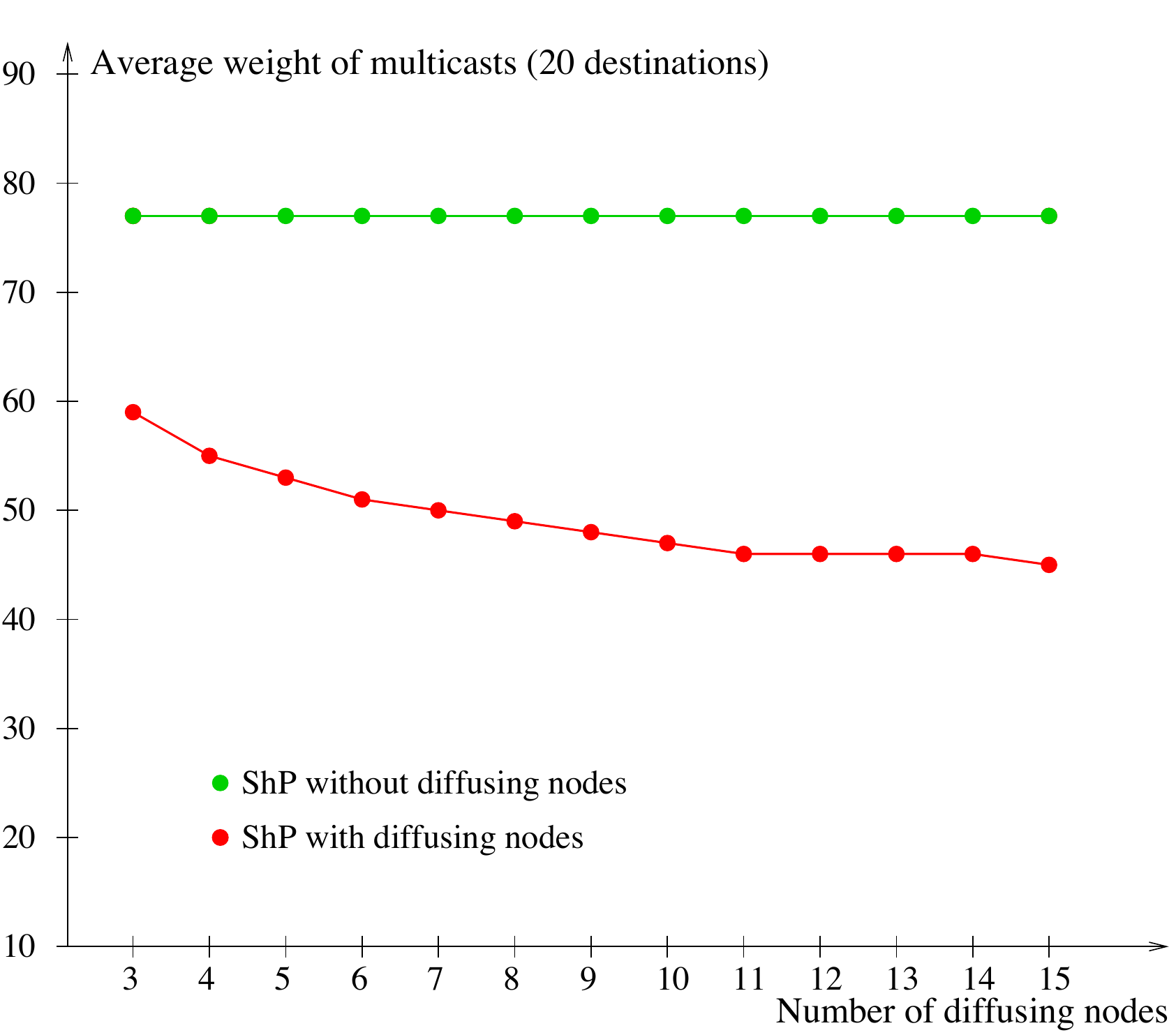}
\caption{Average $A_\epsilon^{\mbox{\scriptsize ShP}}$ weight as a function of the number of diffusing nodes for $20$ multicast destinations \label{ShP2}}
\end{center}
 \end{minipage} \hfill
 \begin{minipage}[b]{.46\linewidth}
\begin{center}
\includegraphics[height=1.75in,width=2.0in]{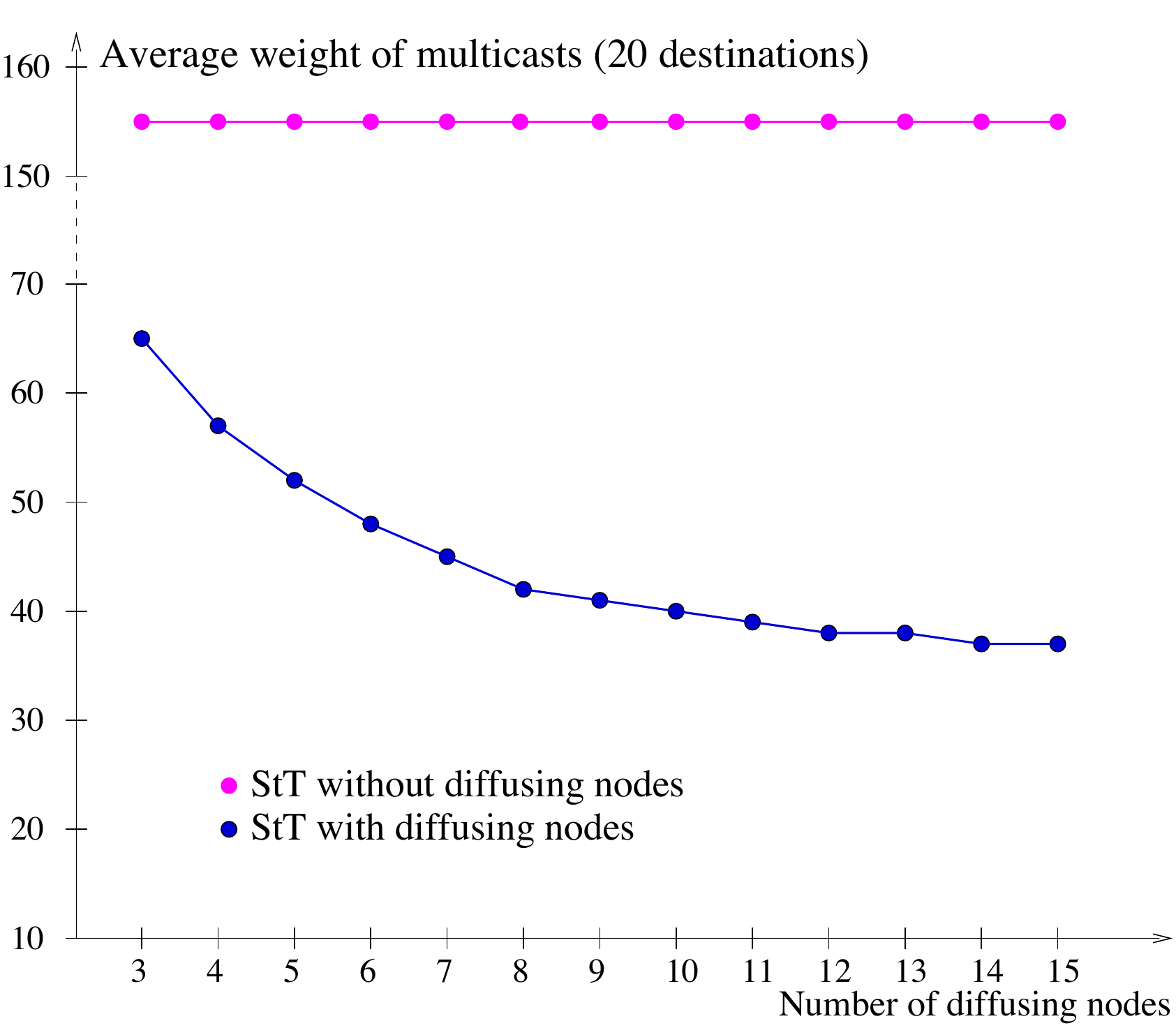}
\caption{Average $A_\epsilon^{\mbox{\scriptsize StT}}$  weight as a function of the number of diffusing nodes for $20$ multicast destinations  \label{StT2}}
\end{center}
 \end{minipage}
\end{figure}

The introduction of three diffusing nodes reduces the weight of $A_\epsilon^{\mbox{\scriptsize ShP}}$ by about  $20$\% (Fig.~\ref{ShP2}). Further additions allows one to lower the tree weight by almost $40$\% for $15$ branching nodes. The influence of the branching nodes on the reduction of the tree weight in the StT case is striking (Fig.~\ref{StT2}): an improvement of almost $60$\% in the case of three branching nodes until almost $75$\% for $15$ of them. Confronting the results of ShP and StT with diffusing nodes we observe that StT, despite its starting point at a worse position, reaches the tree weight of $40$ in the situation in which ShP has this weight of $48$.

In Fig.~\ref{StP2StT2} we observe the relative difference of ShP and StT tree weights for $20$ multicast destinations in function of the number of diffusing nodes. It is not surprising that for this relatively large number of destinations and few diffusing nodes StT exhibits better performance than ShP. When the number of branching nodes increases and approaches the number of destinations, ShP trees become lighter than StT ones for the same reasons as those mentioned in the comments on Fig.~\ref{StP1StT1}.

The next question we ask ourselves concerns the detection of the numbers of diffusing nodes and destinations up to which StT is more advantageous than ShP. For the network investigated above the critical point is $(4,18)$.  In Fig.~\ref{pointDestDiff} we mark critical points starting from which the ShP tree gives ``lighter'' solutions.   For the points above the line we recommend ShP (for example, for three branching nodes and $30$ destinations), for those below the line we recommend StT.  
\begin{figure}
\begin{center}
\includegraphics[height=1.25in,width=2.75in]{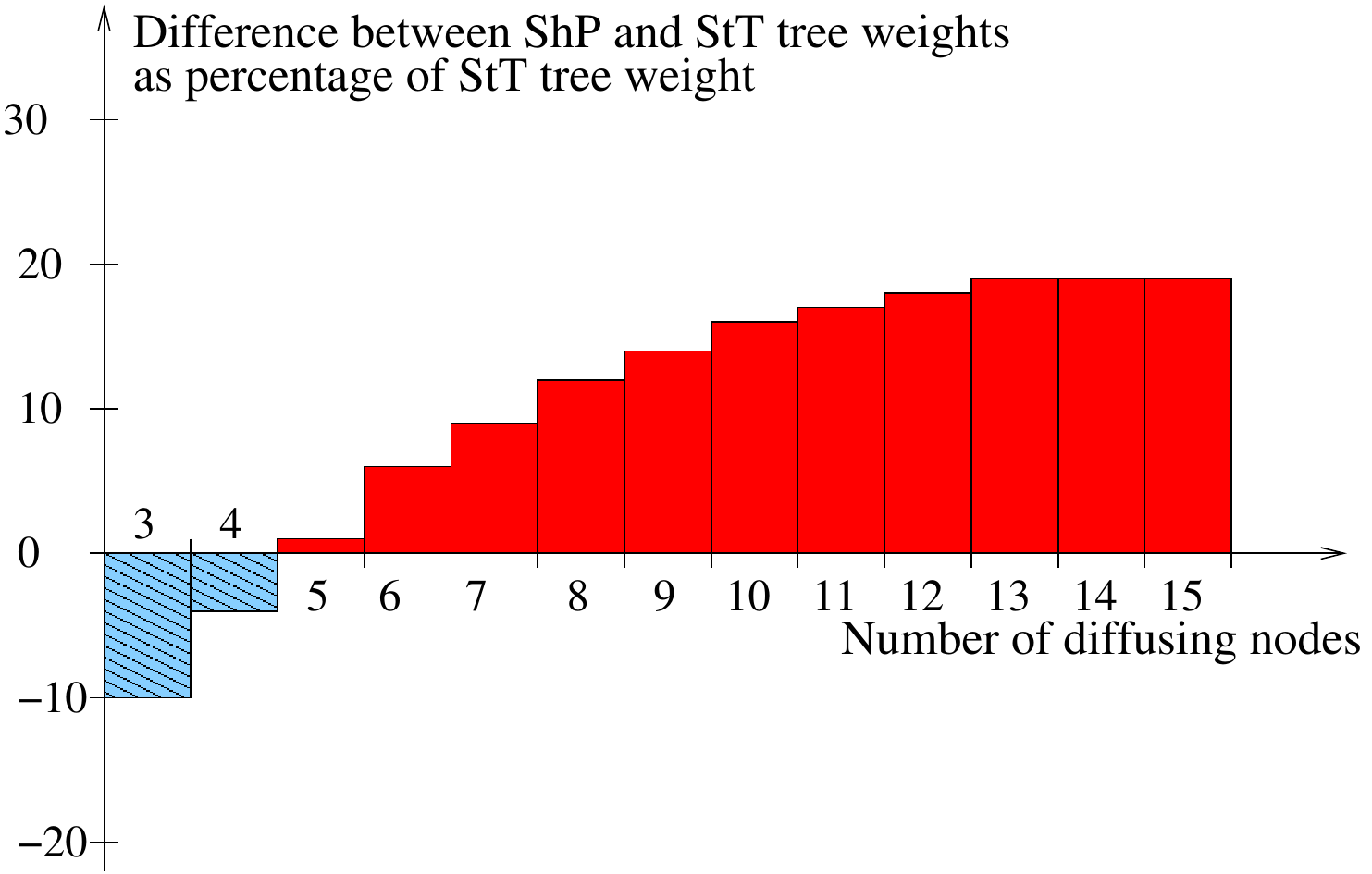}
\caption{Difference  between weights of ShP and StT trees as a percentage of StT tree weight as a function of the number of diffusing nodes for $20$ multicast destinations \label{StP2StT2}}
\end{center}
\end{figure}

As the critical points depicted in Fig.~\ref{pointDestDiff} form a straight line whose slope is five, we are now interested in what this gradient depends on. One may guess that it is determined by the average degree of the network. Indeed, if we look at Fig.~\ref{ratioDegree}, the gradient decreases as the average node degree increases. Consequently, the line seen in Fig.~\ref{pointDestDiff} inclines with the average degree growth. Therefore we conclude that StT is more favourable for loosely connected graphs and ShP is better for dense networks.

\begin{figure}
 \begin{minipage}[b]{.46\linewidth}
\begin{center}
\includegraphics[height=1.75in,width=2.0in]{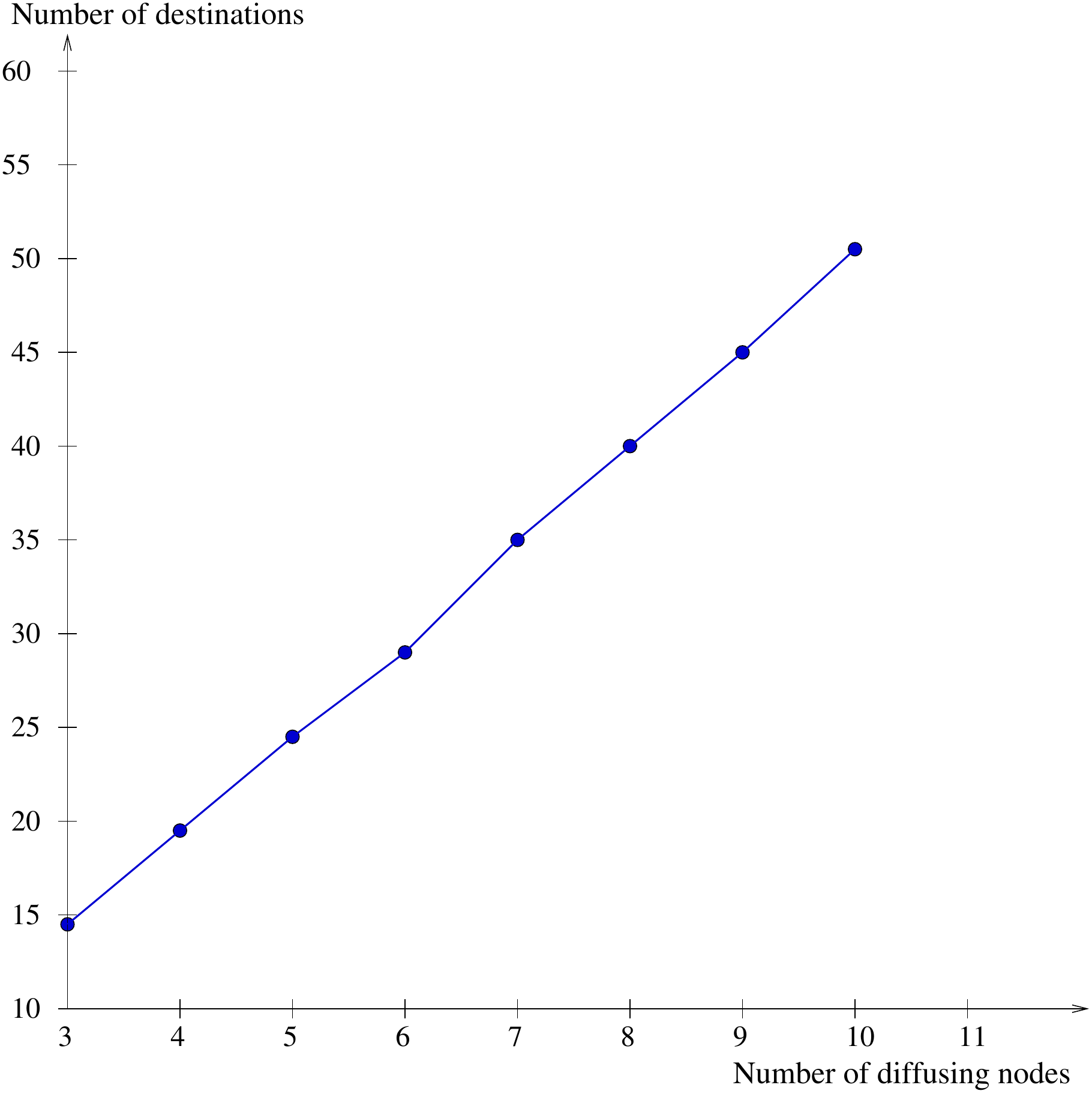}
\caption{Critical point line determining the utility of ShP and StT trees\label{pointDestDiff}}
\end{center}
 \end{minipage} \hfill
 \begin{minipage}[b]{.46\linewidth}
\begin{center}
\includegraphics[height=1.75in,width=2.0in]{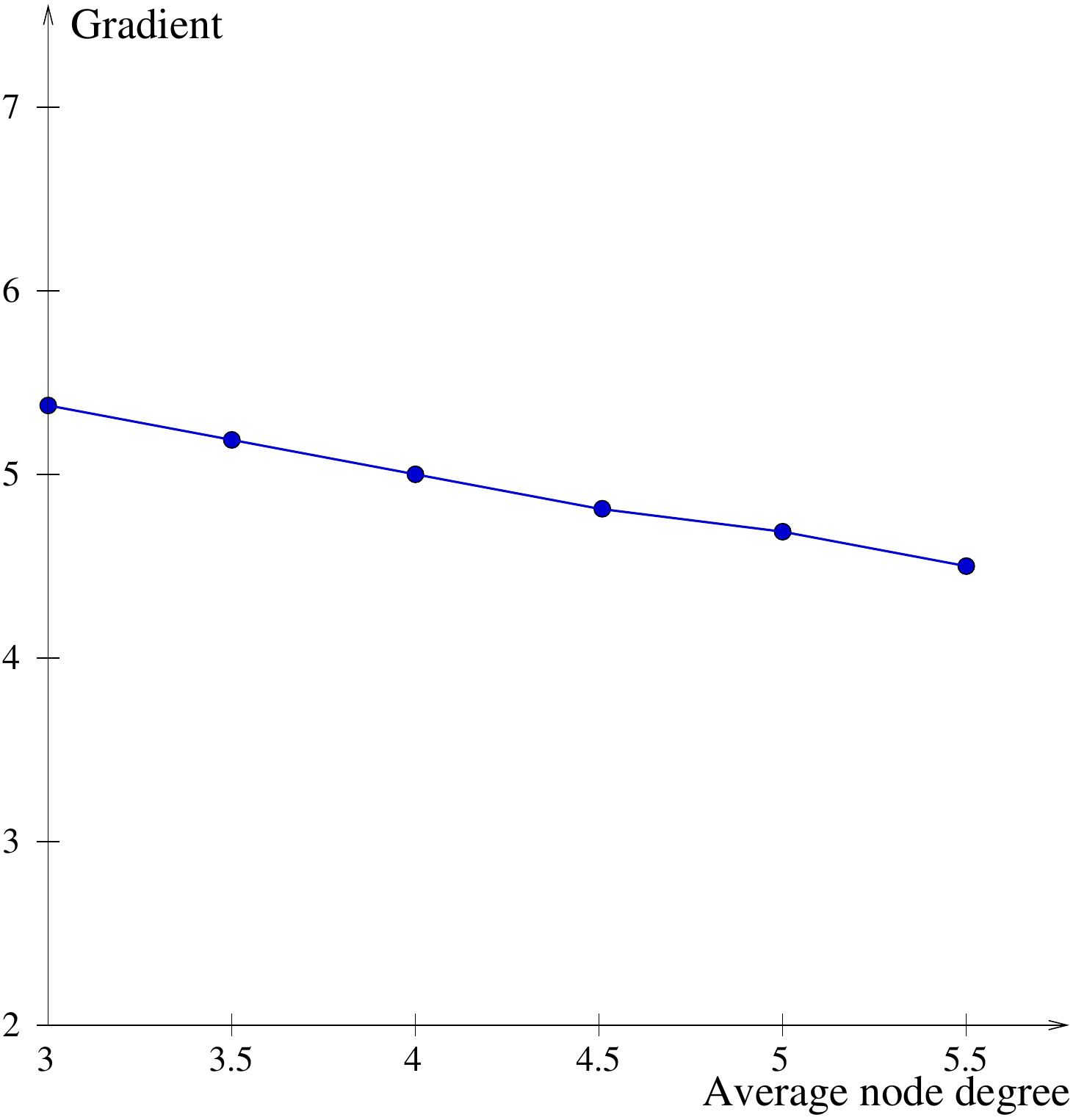}
\caption{Gradient of critical point line in function of the average graph degree \label{ratioDegree}}
\end{center}
 \end{minipage}
\end{figure}

\section{Conclusions and further work}
We studied a problem of infrastructural design of a commercial optical meshed network with a circuit switching routing protocol deployed. This problem was stated within the context of virtual services based on multicast transmission. It concerns frequent and voluminous multicast demands whose source and destinations are {\em \`a{} priori} known and its solution determines the localizations of branching nodes (i.e. routers with higher cost and energy consumption but which allow one to duplicate data and retransmit it in different directions).  A solution to this problem allows a network operator to use his available resources more efficiently and make more profit with less, or even without any, investment.

After formally stating the problem we proposed an algorithm to solve it. Next, we proved its optimality and computed its complexity which is polynomial.
We computed a gain in terms of the used bandwidth compared with multicast trees without any diffusing nodes. Among the two heuristic algorithms which we used to deploy multicast trees the first is based on the shortest past approach (ShP) and the second one exploits a solution to the Steiner tree problem (StT) in undirected graphs. We performed exhaustive computations in order to compare the efficiency of our algorithm for multicast trees built with ShP and StT. We observed the dependency of their efficiency on the numbers of diffusing nodes and destinations. This dependency is influenced by the average network degree. StT works better in loosely connected networks whereas ShP is more efficient for strongly connected ones.   

Generally speaking, we found ShP more efficient in finding a multicast tree than StT. We should not forget, however, that we used the $2$-approximable algorithm. It is not excluded that a more precise StT algorithm (for example \cite{ByrGra+10,RobZel00}) may give better results. We consider implementing these algorithms in order to verify their performance for our purposes.

We plan to continue this work in order to determine a specific solution in particular graphs (for example having bounded treewidth). We conjecture that our algorithm could be extended to this kind of graph.  On the other hand we consider pursuing our work on optimal multicast deployment by studying the Steiner problem in certain oriented graphs. 

\bibliographystyle{plain}
\bibliography{articleP2}

\end{document}